\begin{document}
\bibliographystyle{plain}

\newtheorem{theorem}{Theorem}[section]
\newtheorem{corollary}[theorem]{Corollary}
\newtheorem{note}[theorem]{Note}
\newtheorem{lemma}[theorem]{Lemma}
\newtheorem{claim}[theorem]{Claim}
\newtheorem{fact}[theorem]{Fact}
\newtheorem{definition}[theorem]{Definition}

\newcounter{fignum}
\newcommand{\figlabel}[1]
	{\\Figure \refstepcounter{fignum}\arabic{fignum}\label{#1}}

\def\supp{\mbox{Supp}}

\def\Lin#1{\ensuremath{\textrm{Lin}({#1})}}
\def\Search#1{\ensuremath{\textrm{Search}({#1})}}
\def\TS#1#2#3{\ensuremath{\textrm{TS}^{#1}({#2},{#3})}}
\def\DISJ#1#2{\ensuremath{\textrm{DISJ}^{#1}_{#2}}}
\def\Promise#1#2{\ensuremath{\textrm{Sum-Promise}_{{#1},{#2}}}}
\def\Psearch{\ensuremath{\Pi_{\textrm{search}}}}
\def\Pstep{\ensuremath{\Pi_{\textrm{step}}}}
\def\Ppromise{\ensuremath{\Pi_{\textrm{promise}}}}

\def\Pcc#1{\ensuremath{\mathsf{P}_{#1}^{cc}}}
\def\RPcc#1{\ensuremath{\mathsf{RP}_{#1}^{cc}}}
\def\NPcc#1{\ensuremath{\mathsf{NP}_{#1}^{cc}}}
\def\BPPcc#1{\ensuremath{\mathsf{BPP}_{#1}^{cc}}}
\def\corr{\ensuremath{\mathrm{corr}}}
\def\disc{\ensuremath{\mathrm{disc}}}
\def\deg{\ensuremath{\mathrm{deg}}}
\def\E{\ensuremath{\mathbb{E}}}
\def\Lift#1#2{\ensuremath{\mathrm{Lift}({#1},{#2})}}
\def\OR{\ensuremath{\mathrm{OR}}}

\newcommand{\ignore}[1]{\relax}

\title{Separating NOF communication complexity classes RP and NP}

\author{Matei David\\
	{\normalsize Computer Science Department}\\
	{\normalsize University of Toronto}\\
	{\tt matei at cs toronto edu}
\and Toniann Pitassi\thanks{\ Research supported by NSERC.} \\
	{\normalsize Computer Science Department}\\
	{\normalsize University of Toronto}\\
	{\tt toni at cs toronto edu}
}

\maketitle
%\today

\begin{abstract}
We provide a non-explicit separation
of the number-on-forehead communication complexity classes
RP and NP when the number of players is up to
$\delta \cdot \log{n}$ for any $\delta < 1$.
Recent lower bounds on Set-Disjointness~\cite{ls-setdisjointness,ca-setdisjointness}
provide an explicit separation between these classes
when the number of players is only up to $o(\log\log{n})$.
\end{abstract}

%\setcounter{page}{0}
%\thispagestyle{empty}
%\newpage

\section{Introduction}

In the number-on-forehead (NOF) model of communication complexity,
$k$ players are trying to evaluate a function $F$ defined on $k n$ bits.
The input of $F$ is partitioned into $k$ pieces of $n$ bits each,
call them $x_1, \dots, x_k$,
and $x_i$ is placed, metaphorically, on the forehead of player $i$.
Thus, each player sees $(k-1) n$ of the $k n$ input bits.
The players communicate by writing bits on a shared blackboard
in order to compute $F$.
This model was introduced by~\cite{cfl:multiparty}
and it has many applications, including
circuit lower bounds~\cite{hg90,nw91},
time/space tradeoffs for Turing Machines,
pseudo-random number generators for space-bounded Turing Machines~\cite{bns89},
and proof system lower bounds \cite{bps:setdisj}.

In this model, a protocol is said to be ``efficient'' if it has
complexity $(\log{n})^{O(1)}$.
Correspondingly, \Pcc{k}, \RPcc{k}, \BPPcc{k} and \NPcc{k} are the classes of functions
having efficient deterministic, one-sided-error randomized,
(two-sided-error) randomized and nondeterministic protocols, respectively.
The usual inclusions between these classes apply,
so $\Pcc{k} \subseteq \RPcc{k} \subseteq \NPcc{k}$
and $\RPcc{k} \subseteq \BPPcc{k}$.
One of the most fundamental questions in NOF communication complexity
is to provide separations between these classes.
In \cite{bdpw-nof}, Beame et al. show that $\RPcc{k} \neq \Pcc{k}$
for $k \le n^{O(1)}$ players.
Recently, \cite{ca-setdisjointness,ls-setdisjointness}
 show that $\NPcc{k} \not\subset \BPPcc{k}$
(and thus, that $\NPcc{k} \neq \RPcc{k}$)
for $k \le o(\log\log{n})$ players.
Our main result in this paper is the following.

\begin{theorem}[Main Theorem]
\label{rp-vs-np-thm}
  $\NPcc{k} \not\subset \BPPcc{k}$ (and thus, $\NPcc{k} \neq \RPcc{k}$)
  for all $\delta < 1$ and all
  $k \le \delta \cdot \log{n}$.
\end{theorem}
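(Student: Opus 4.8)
The plan is to produce, for every fixed $\delta<1$ and every $k\le\delta\log n$, a function $F$ that lies in $\NPcc{k}$ essentially by definition yet has bounded-error randomized NOF complexity $n^{\Omega(1)}$, which is super-polylogarithmic and therefore places $F$ outside $\BPPcc{k}$. Membership in $\NPcc{k}$ is the cheap half: a function is in $\NPcc{k}$ exactly when its $1$-set is a union of $2^{\mathrm{polylog}(n)}$ cylinder intersections, so I will simply take $F=\OR_s(C_1,\dots,C_s)$ with $s=2^{\mathrm{polylog}(n)}$ cylinder intersections $C_t$ of a particularly simple shape --- for instance built from a ``Sum-Promise''-type predicate $\Promise{m}{r}$ or a linear predicate $\Lin{A}$ on a block of coordinates, whose defining condition on that block is verifiable by two players reconstructing the block followed by a $\mathrm{polylog}(n)$-bit broadcast. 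A nondeterministic protocol then guesses the index $t$ of a satisfied term at cost $O(\log s)=\mathrm{polylog}(n)$ and runs this cheap verification, so $F\in\NPcc{k}$; the entire burden is to choose the $C_t$ so that, although each is trivial on its own, their disjunction defeats every efficient randomized protocol.

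For the hardness side I would go through Yao's min--max together with the generalized discrepancy (corruption) method: it suffices to exhibit a distribution $\mu$ on inputs under which $F$, sign-normalized, has correlation at most $2^{-\Omega(n/2^{ck})}$ with every $k$-dimensional cylinder intersection, where $c$ is an absolute constant; for then any $\ell$-bit protocol has correlation at most $2^{\ell}\cdot2^{-\Omega(n/2^{ck})}$, forcing $\ell=\Omega(n/2^{ck})=n^{\Omega(1)}$ as long as $k\le\delta\log n$ with $\delta<1$. The crucial point is the shape of the exponent: the bound must degrade only like $2^{-k}$, not like $2^{-2^{k}}$ or worse, and the mechanism that can deliver a $2^{-k}$-type loss is a Babai--Nisan--Szegedy ``cube'' argument applied block-by-block --- each block contributes an exponential sum that, for the gadget we choose, collapses to a count of solutions of a (multi)linear system, a single good estimate on one block tensors up over the independent blocks, and the product bounds $\disc_\mu(F)$. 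Organized as a lifting, one starts from a base search/promise problem $\Pi$ of the form $\Search{\Phi}$ that is $\mathsf{NP}$-easy (certificates of size $\mathrm{polylog}(n)$) but has randomized decision-tree complexity $n^{\Omega(1)}$, composes it with the $k$-party gadget $g$ to get $F=\Lift{\Pi}{g}$, and proves a reconstruction lemma in which the protocols $\Psearch$, $\Pstep$ and $\Ppromise$ combine to turn a randomized NOF protocol for $F$ into a randomized decision-tree solver for $\Pi$, transferring the query lower bound down to communication.

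The main obstacle, and the reason the separation is non-explicit, is that every explicit candidate fails on one of the two sides: functions with small NOF discrepancy across many players, such as the generalized inner product, are not $\mathsf{NP}$-easy, whereas the one explicit $\mathsf{NP}$-easy candidate, Set-Disjointness, is known to have small NOF discrepancy only for $k=o(\log\log n)$ --- beyond that the best available bounds collapse, which is precisely the barrier behind~\cite{ls-setdisjointness,ca-setdisjointness}. I would sidestep this by choosing the gadget (equivalently the family of cylinder intersections $C_t$) at random from a carefully designed ensemble, every member of which lies in $\NPcc{k}$ by the disjunctive construction above, and proving that with probability $1-o(1)$ the resulting $F$ achieves the required $2^{-\Omega(n/2^{ck})}$ correlation with all cylinder intersections. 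The heart of the proof is then the accounting that comes with this probabilistic argument: a per-cylinder-intersection concentration estimate for the cube sum of a random gadget, balanced against the number of cylinder intersections (or of efficient protocols) that must be ruled out, and I expect showing that this bookkeeping still closes when $k$ is as large as $\delta\log n$ --- rather than only $o(\log\log n)$ --- to be where essentially all the difficulty lies; it is also what pins down the exponent $n/2^{ck}$ and the threshold $\delta<1$. The remaining steps --- verifying that the $\OR$-of-cylinder-intersections construction really keeps $F$ in $\NPcc{k}$, and that the cube estimate and the reconstruction lemma go through for the chosen gadget --- should be routine by comparison.
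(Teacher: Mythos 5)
There is a genuine gap at the heart of your hardness plan. You propose to exhibit a distribution under which $F$ itself has correlation at most $2^{-\Omega(n/2^{ck})}$ with every cylinder intersection and then conclude via the $2^{\ell}\cdot\disc$ argument. But that quantity is exactly the discrepancy of $F$, and for any function in $\NPcc{k}$ the discrepancy is at least $2^{-\mathrm{polylog}(n)}$ under \emph{every} distribution (if the $1$-set is covered by $2^{\mathrm{polylog}(n)}$ cylinder intersections, either the cover contains a monochromatic cylinder intersection of measure $2^{-\mathrm{polylog}(n)}$ or the trivial cylinder intersection already has large bias). This is precisely the obstruction the paper flags (Lemma~3.1 of \cite{ca-setdisjointness}), and it is why the plain discrepancy method cannot separate these classes. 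You invoke the words ``generalized discrepancy,'' but you never introduce the ingredient that makes it work: a \emph{surrogate} function $G$ that is not NP-easy, is highly correlated with $F$ under a tailored distribution $\lambda$, and does have exponentially small discrepancy. In the paper this surrogate comes from Paturi's lower bound $\deg_{5/6}(\OR)\ge c\sqrt m$ together with Sherstov's duality/orthogonality lemma, producing $g$ and $\mu$ with $\corr_\mu(g,\OR)\ge 5/6$ and $g$ orthogonal to all functions depending on at most $c\sqrt m$ coordinates; one then bounds $\disc_\lambda(\Lift{g}{\phi})$ and transfers the bound to $R(\Lift{\OR}{\phi})$ via the generalized discrepancy lemma. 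Without some such surrogate your sufficient condition is provably unattainable, so the argument cannot close as stated.

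Your probabilistic bookkeeping is also aimed at the wrong target. You plan a per-cylinder-intersection (or per-protocol) concentration estimate balanced against the number of objects to be ruled out; the number of cylinder intersections is doubly exponential in $n$, and no mechanism is offered that would give failure probabilities small enough for such a union bound. The paper needs nothing of the sort: the BNS inequality already eliminates the dependence on the particular cylinder intersection, so one simply bounds $\E_\phi\bigl[(\disc_\lambda(\Lift{g}{\phi}))^{2^k}\bigr]$ and extracts a single good masking function $\phi$ by averaging. Concretely, for fixed $\overline y^0,\overline y^1$ with no coordinate collision, the $2^k$ sets $S_u=\phi(y_1^{u_1},\dots,y_k^{u_k})$ are independent random $m$-subsets of $[1,n]$; if the number of ``conflicts'' is below $c\sqrt m\,2^k/2$ the inner expectation vanishes by the orthogonality of $g$, conflicts of size $q$ cost at most $2^{q}$, and they occur with probability at most $(m2^k/n)^q$, which is summable once $m2^k\le n^{1-\Omega(1)}$ --- this is the entire reason $k$ can reach $\delta\log n$, with the final bound $\Omega(\sqrt m)=n^{\Omega(1)}$ rather than your posited $n/2^{ck}$ shape. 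Finally, the route through a base search problem, decision-tree lower bounds and a reconstruction lemma is a different (lifting-theorem) technology that the paper does not use and that was not available in the NOF setting at this range of $k$; the constructs you name for it do not correspond to anything in this paper. Your NP-membership argument is fine, and choosing the gadget at random is indeed the right instinct, but the hardness side as proposed would fail.
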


Until very recently, it was far from clear how to obtain communication
complexity lower bounds in the number-on-forehead model
for any function that could separate nondeterministic
from randomized complexity. 
The difficulty can be described as follows.
The only method currently known for obtaining multiparty NOF lower bounds
is the discrepancy method \cite{bns89,raz:bns,chung-tetali}. 
Lower bounds using discrepancy
are obtained by showing that the function in question has
small discrepancy with respect to some distribution.
Unfortunately, it is not hard to see that every function
with small nondeterministic complexity has high discrepancy
with respect to every distribution (see, for example, Lemma~3.1 in \cite{ca-setdisjointness}.)
Thus, the discrepancy method
seemed doomed to failure and new techniques seemed to be required.

However, in very recent work, these difficulties were overcome
to obtain a surprisingly elegant lower bound for the Set-Disjointness
function \cite{ca-setdisjointness,ls-setdisjointness}.
The idea behind their proofs as well as ours is as follows.

In a recent paper, Sherstov \cite{sherstov-quantum} 
(and implicitly also in Razborov \cite{razborov-quantum}) 
applied the discrepancy
method in a more general way for the 2-player model in order to overcome the above
difficulties. The {\it generalized} discrepancy method was 
adapted to the number-on-forehead
model in \cite{ca-setdisjointness,ls-setdisjointness} and 
can be described at a high level as follows. 
Start with some candidate function $F$, where $F$ has
small nondeterministic complexity, and we want to prove that $F$ has
high randomized communication complexity.
Now come up with a function $G$ and a distribution $\lambda$
such that: (1) $F$ and $G$ are highly correlated with respect
to $\lambda$; and (2) $G$ has small discrepancy with respect to $\lambda$.
It is not hard to see that if such a $G$ can be found, then
since $G$ has small discrepancy, it requires large
randomized complexity, and moreover since $F$ and $G$ are
very correlated, this in turn implies lower bounds on the randomized
complexity of $F$ as well.

Thus, to use the generalized discrepancy method,
the problem is to come up with the functions $F$ and $G$.
To accomplish this, we will use another wonderful idea due to
Sherstov \cite{sherstov-stoc}, and substantially generalized to
apply to the number-on-forehead setting by Chattopadhyay \cite{chatto-focs}.
We consider special functions of the form $F^{\phi}$.
This will be a function on $(k+1)n$ bits, computed by $k+1$ players.
Player 0 receives an $n$-bit vector $x$.
Player $i$, for $1 \leq i \leq k$ gets an $n$-bit vector
$y_i$.
The function $\phi$ takes as input $y_1, \ldots,
y_k$ and outputs an $n$-bit string $z$,
where $z$ has exactly $m$ 1's. We will view
$\phi$ a selecting $m$ bits/indices of Player 0's input, $x$.
The function $F^{\phi}$ will be the $\OR$ function applied to the
$m$ bits of $x$ as specified by 
$\phi(y_1, \ldots y_k)$.
(In earlier terminology, the $k+1$ players will apply the
$\OR$ function to Player 0's {\it unmasked} input.)

Note that regardless of what function $\phi$ is chosen,
$F^{\phi}$ will have a small nondeterministic protocol.
Player 0 simply guesses an index $j$ that is one of the indices
chosen by $\phi$, and then any of the other players
can easily verify whether or not $x_j$ is 1 in that position.
When $\phi$ is the bitwise AND function, then
$F^{\phi}$ is the Set-Disjointness function.
We will show that for almost all $\phi$, the randomized
communication complexity of $F^{\phi}$ is large
as long as $k$ is at most a constant times $\log n$.
Because we will be working with a random $\phi$, as
a bonus, our argument is substantially simpler that
the previous bounds obtained for Set-Disjointness.

\section{Definitions and Notation}

\subsection{Communication Complexity}

In the number-on-forehead (NOF) multiparty communication complexity game 
\cite{cfl:multiparty} there are
$k$ players that are trying to collaborate to
compute a function $F: X_1\times \ldots\times X_k \rightarrow \{0,1\}$ where
each $X_i=\{0,1\}^n$.
The $kn$ input bits are partitioned into $k$ sets, each of size $n$.
For $(x_1,\ldots,x_k)\in \{0,1\}^{kn}$,
and for each $i$, player $i$ knows the values of all of the inputs
except for $x_i$ (which conceptually is thought of as being placed on player
$i$'s forehead).

The players exchange bits according to an agreed-upon
protocol, by writing them on a public blackboard. 
A \emph{protocol} specifies, for every possible blackboard
contents, whether or not the communication is over, the output if over
and the next player to speak if not. A protocol also specifies what
each player writes as a function of the blackboard contents and of the
inputs seen by that player. The \emph{cost} of a protocol is the maximum
number of bits written on the blackboard.

In a \emph{deterministic protocol}, the blackboard is initially empty.
A \emph{randomized protocol} of cost $c$ is simply a probability
distribution over deterministic protocols of cost $c$, which can be
viewed as a protocol in which the players have access
to a \emph{shared} random string.
A \emph{non-deterministic protocol} is one where an initial guess string
appears on the blackboard at the beginning of the protocol,
and the players are trying to verify that the output of the function is 1
in the usual sense: there exists a guess string where the output of the protocol is 1
if and only if the output of the function is 1.

The \emph{deterministic communication complexity of $F$}, written
$D_k(F)$, is the minimum cost of a deterministic protocol for $F$ that
always outputs the correct answer.
For $0 \leq \epsilon < 1/2$,
let $R_{k,\epsilon}(F)$ denote the
minimum cost of a randomized protocol for $F$ which, for every input,
makes an error with probability at most $\epsilon$ (over the choice of
the deterministic protocols).
The \emph{(two-sided-error) randomized communication complexity of $F$} is
$R_k(F) = R_{k,1/3}(F)$.
Let $R_{k,\epsilon}^1(F)$ denote the minimum cost
of a randomized protocol for $F$
which is correct on all 0-inputs,
and for every 1-input, it makes an error with probability at most $\epsilon$.
The \emph{one-sided-error randomized communication complexity of $F$} is
$R_k^1(F) = R_{k,1/3}^1(F)$.
The \emph{non-deterministic communication complexity of $F$},
written $N_k(F)$, is the minimum cost of a non-deterministic protocol for $F$.
We usually drop the subscript $k$ when the number of players is clear
from the context.

Since any function $F_n$ on $kn$ bits can be computed using only $n$ bits of
communication,
following~\cite{bfs86}, for sequences of functions $F=(F_n)_{n\in\mathbb{N}}$,
protocols are considered ``efficient'' or ``polynomial''
if only polylogarithmically many bits are exchanged.
Accordingly, let \Pcc{k}, \RPcc{k}, \BPPcc{k} and \NPcc{k}
denote the classes of function families $F$
for which $D_k(F_n), R_k^1(F_n), R_k(F_n)$ and $N_k(F_n)$
are $(\log{n})^{O(1)}$, respectively.

Even though the standard communication complexity definitions above
are given for functions with range $\{0,1\}$,
we find it more convenient to work with the range $\{-1,1\}$.
We transform the former into the latter by mapping
$0 \rightarrow 1$ (representing \emph{false})
and $1 \rightarrow -1$ (representing \emph{true}).
Thus, for example, when the range of $F$ is $\{-1,1\}$,
in a non-deterministic protocol the players are trying to verify
that the output of $F$ is -1.

The most important method to prove lower bounds for randomized
communication complexity uses the concept of discrepancy.
An \emph{$i$-cylinder} $\Gamma_i$ in $X_1\times \ldots \times X_k$
is a set such that for all $x_1\in X_1,\ldots,x_k\in X_k,x'_i\in X_i$ we have
$(x_1,\ldots,x_i,\ldots, x_k)\in \Gamma_i$ if and only if
$(x_1,\ldots,x'_i,\ldots,x_k)\in \Gamma_i$.
A \emph{cylinder intersection}
is a set of the form $\bigcap_{i=1}^k \Gamma_i$ where each $\Gamma_i$ is an $i$-cylinder
in $X_1\times \cdots \times X_k$.
For a set $S$, let $1_S$ be its characteristic function,
which is 1 if the input is in $S$ and 0 otherwise.
Let $\lambda$ be a distribution on the inputs of $F$.
The \emph{discrepancy of $F$ on $\Gamma$ under $\lambda$}
is $\disc_{k,\lambda}^{\Gamma}(F) = \left|
\E_{\overline{x} \sim \lambda}[F(\overline{x}) 1_\Gamma(\overline{x})] \right|$.
The \emph{discrepancy of $F$ under $\lambda$}
is $\disc_{k,\lambda}(F) = \max_{\Gamma}\disc_{k,\lambda}^\Gamma(F)$.
The \emph{standard discrepancy method} \cite{bns89}
connects the discrepancy of a function $F$
with its randomized communication complexity as follows:
for every distribution $\lambda$,
$R_{k,\epsilon}(F) \ge \log \left( \frac{1-2 \epsilon}{\disc_{k,\lambda}(F)} \right)$.

\subsection{Notation}

Throughout this paper, the functions whose communication complexity
we are analyzing are denoted by capital letters such as $F$.
As mentioned in the introduction, we 
will be restricting our attention to certain functions which are
constructed from a \emph{base} function, usually denoted by lower case $f$,
and a \emph{masking} function, usually denoted by $\phi$.
In general, $m$ denotes the size of the input to the base function $f$,
and the range of this function is $\{-1,1\}$.
A specific base function we will work with is the $\OR$ function,
which takes on the value -1 if and only if any of its input bits is 1.
The masking function $\phi$ takes as input $k$ strings of $n$ bits each,
usually denoted by $y_1, \dots, y_k$, and it's output is
an $m$-element subset of $[1,n]$. We always have $m \le n$.
Starting with a base function $f$ and a masking function $\phi$,
we construct a function $\Lift{f}{\phi}$ on $(k+1)n$ input bits as follows.
Given $n$-bit inputs $x, y_1, \dots, y_k$, $\phi$ is evaluated
on the latter $k$ inputs to select a set of $m$ bits in $x$
on which we apply $f$.
Formally, $\Lift{f}{\phi}(x,y_1, \dots, y_k) = f(x|\phi(y_1, \dots, y_k))$,
where for a set $S \subseteq [1,n]$,
$x|S$ denotes the substring of $x$ indexed by the elements in $S$.
We are interested in the communication complexity of $\Lift{f}{\phi}$
in the NOF model with $k+1$ players,
where player 0 gets $x$ and players 1 through $k$ get $y_1$ through $y_k$,
respectively.

\subsection{Correlation, Fourier Representation and Degree}

Let $f, g : \{0,1\}^m \rightarrow \mathbb{R}$.
Let $\mu$ be a distribution on the set $\{0,1\}^m$.
We define the \emph{correlation between $f$ and $g$ under $\mu$}
to be $\corr_{\mu}(f,g) = \E_{x \sim \mu}[f(x)g(x)]$.
Whenever we omit to mention a specific distribution
when computing the correlation, an expected value or a probability,
it is to be assumed that we are talking about the uniform distribution.

For $S \subseteq [1,m]$, let $\chi_S(x) = (-1)^{\sum_{i \in S}{x_i}}$
be the Fourier character of the set $S$.
Let $f : \{0,1\}^m \rightarrow \mathbb{R}$
and let $f_S = \corr(f, \chi_S)$.
Then $f(x) = \sum_{S \subseteq [1,m]}{f_S \chi_S(x)}$
is the Fourier representation of $f$.
The \emph{exact degree of $f$} is the size of the largest $S$
such that $f_S$ is non-zero.
The \emph{$\epsilon$-approximate degree of $f$},
denoted by $\deg_{\epsilon}(f)$
is the smallest $d$
for which there exists a function $g$ of exact degree $d$
such that $\max_x{|f(x)-g(x)|} \le \epsilon$.

\subsection{Set Families}

Let $\mathcal{S} = (S_1, \dots, S_z)$ be a multi-set
of $m$-element subsets of $[1,n]$.
Let the \emph{range} of $\mathcal{S}$, denoted by $\bigcup \mathcal{S}$,
be the set of indices from $[1,n]$
that appear in at least one set in $\mathcal{S}$.
Let the \emph{boundary} of $\mathcal{S}$, denoted by $\partial \mathcal{S}$,
be the set of indices from $[1,n]$
that appear in exactly one set in the collection $\mathcal{S}$.

\section{Statement of Results}

Our main technical result is the following.
\begin{theorem}
\label{non-rp-thm}
  Let $\delta < 1$ be a constant.
  Let $\epsilon = (1-\delta)/4$.
  Let $m = n^\epsilon$ and let $k \le \delta \cdot \log{n}$.
  There exists a function $\phi$ such that
  $R_{k+1}(\Lift{\OR}{\phi}) \ge n^{\Omega(1)}$.
\end{theorem}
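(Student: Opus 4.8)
I would run the generalized discrepancy method exactly as the introduction sketches, with the hard function $F=\Lift{\OR}{\phi}$ and a ``correlated'' function $G$ together with a distribution $\lambda$ extracted from a dual witness for the approximate degree of $\OR_m$. Fix a constant $\epsilon'\in(2/3,1)$. Since $\deg_{\epsilon'}(\OR_m)=\Theta(\sqrt m)$, LP duality gives $\psi\colon\{0,1\}^m\to\mathbb{R}$ with $\sum_z|\psi(z)|=1$, $\E_z[\psi(z)\OR_m(z)]\ge\epsilon'$, and $\psi_T=0$ whenever $|T|<d:=\Theta(\sqrt m)$; note that $\sum_z|\psi(z)|=1$ forces $|\psi_T|\le 2^{-m}$ for every $T$. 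For any masking function $\phi$, let $\lambda$ be the distribution on $(\{0,1\}^n)^{k+1}$ that samples $y_1,\dots,y_k$ independently and uniformly, sets $S=\phi(y_1,\dots,y_k)$, samples the coordinates $x|S$ from the density $|\psi(\cdot)|$, and samples the other $n-m$ coordinates of $x$ independently and uniformly, and let $G(x,y_1,\dots,y_k)=\operatorname{sgn}\psi(x|S)$. Then, for \emph{every} $\phi$,
\[
\corr_\lambda\!\left(\Lift{\OR}{\phi},G\right)=\E_{S}\,\E_{z\sim|\psi|}\!\left[\OR_m(z)\operatorname{sgn}\psi(z)\right]=\E_z[\psi(z)\OR_m(z)]\ge\epsilon',
\]
so the correlation is a positive constant independent of $\phi$, and (by the generalized discrepancy method, i.e.\ the bound stated above with $1$ replaced by $\corr_\lambda(F,G)$ and $\disc(F)$ by $\disc(G)$) it suffices to find $\phi$ with $\disc_{k+1,\lambda}(G)\le 2^{-n^{\Omega(1)}}$; then $R_{k+1}(\Lift{\OR}{\phi})\ge\log\frac{\epsilon'-2/3}{\disc_{k+1,\lambda}(G)}=n^{\Omega(1)}$.

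\textbf{Reducing to a combinatorial estimate.} Since $2^{(k+1)n}\lambda(x,\bar y)G(x,\bar y)=2^m\psi\bigl(x|\phi(\bar y)\bigr)=:h(x,\bar y)$, $\disc_{k+1,\lambda}(G)$ equals the uniform-distribution discrepancy of the real function $h$. Applying the BNS/cube bound for $k+1$ players, cubing over the $k$ players holding $y_1,\dots,y_k$, gives $\disc_{k+1,\lambda}(G)^{2^k}\le 2^{2^k m}\,\E_{(y_i^0,y_i^1)_{i\le k}}\bigl|\E_x\prod_{u\in\{0,1\}^k}\psi(x|S_u)\bigr|$, where $S_u=\phi(y_1^{u_1},\dots,y_k^{u_k})$. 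Expanding each factor over the Fourier basis and writing $S_u(T)\subseteq[1,n]$ for the image of $T\subseteq[1,m]$ under the increasing bijection $[1,m]\to S_u$, the inner expectation is $\sum\prod_u\psi_{T_u}$ over tuples $(T_u)_u$ with $\triangle_u S_u(T_u)=\emptyset$; using $\psi_T=0$ for $|T|<d$ only tuples with every $|T_u|\ge d$ survive, and using $|\psi_T|\le 2^{-m}$ the crucial $2^{2^k m}$ prefactor cancels:
\[
2^{2^k m}\Bigl|\E_x\prod_u\psi(x|S_u)\Bigr|\ \le\ \#\bigl\{(T_u)_u:\ |T_u|\ge d\ \forall u,\ \ \triangle_u S_u(T_u)=\emptyset\bigr\}.
\]
Two elementary facts about the family $\mathcal S=(S_u)_{u\in\{0,1\}^k}$ then control the right side. (a) Dropping the size constraint, the number of tuples with $\triangle_u S_u(T_u)=\emptyset$ is exactly $2^{\,2^k m-|\bigcup\mathcal S|}$, by a parity count (one $\mathbb F_2$-constraint per element of $\bigcup\mathcal S$, on otherwise free coordinate-indicators). (b) An element of $\partial\mathcal S$ — lying in exactly one $S_{u_0}$ — forces its coordinate out of $T_{u_0}$; hence if the set above is nonempty then $|T_{u_0}|\le |S_{u_0}\setminus\partial\mathcal S|$, so each $S_u$ meets $\bigcup_{u'\ne u}S_{u'}$ in $\ge d$ elements, and double-counting, $2^k m-|\bigcup\mathcal S|=\sum_{j}(d_j-1)^+\ge \tfrac12\sum_{u}|S_u\setminus\partial\mathcal S|\ge 2^{k-1}d$. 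Combining (a) and (b),
\[
\disc_{k+1,\lambda}(G)^{2^k}\ \le\ \E_{(y_i^0,y_i^1)}\Bigl[\,2^{\,2^k m-|\bigcup\mathcal S|}\cdot\mathbf 1\bigl[\,2^k m-|\textstyle\bigcup\mathcal S|\ge 2^{k-1}d\,\bigr]\Bigr].
\]

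\textbf{Choosing $\phi$, and concluding.} Take $\phi$ uniformly at random. Outside a $k2^{-n}$-fraction of cubes (two of the $2^k$ argument tuples coincide, a case absorbed by the trivial bound $|\E_x\prod_u\psi(x|S_u)|\le 2^{-m}$), the $S_u$ are i.i.d.\ uniform $m$-subsets of $[1,n]$, so $\Pr[\,2^k m-|\bigcup\mathcal S|\ge c\,]\le\binom{2^k m}{c}(2^k m/n)^c\le\bigl(e\,2^{2k}m^2/(cn)\bigr)^c$. With $m=n^{(1-\delta)/4}$, $d=\Theta(n^{(1-\delta)/8})$ and $2^k\le n^\delta$, for $c\ge 2^{k-1}d$ the base $e\,2^{2k}m^2/(cn)$ is $O(n^{-5(1-\delta)/8})=o(1)$: this is precisely where $\delta<1$ enters (equivalently, $2^k m^2\ll n\,d$). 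Summing, $\E\bigl[2^{\,2^k m-|\bigcup\mathcal S|}\,\mathbf 1[\,2^k m-|\bigcup\mathcal S|\ge 2^{k-1}d\,]\bigr]=\sum_{c\ge 2^{k-1}d}2^c\Pr[\,2^k m-|\bigcup\mathcal S|=c\,]\le 2^{-\Omega(2^k d)}$, whence $\disc_{k+1,\lambda}(G)\le 2^{-\Omega(d)}=2^{-\Omega(\sqrt m)}=2^{-n^{\Omega(1)}}$ for a random, hence some fixed, $\phi$; plugging into the generalized discrepancy bound finishes the proof. I expect the discrepancy estimate — establishing facts (a) and (b) cleanly and pushing the tail bound through so that $\delta<1$ falls out exactly — to be the main work. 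The favorable cancellation, that a ``bad'' cube already needs $2^{k-1}d$ colliding elements while only a $2^k$-th root is taken at the end, is what lets the argument survive all the way up to $k\le\delta\log n$; it is also where the paper's set-family notions ($\bigcup\mathcal S$, $\partial\mathcal S$) and the auxiliary Sum-Promise analysis would naturally enter.
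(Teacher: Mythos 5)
Your proposal is correct, and at the global level it runs the same pipeline as the paper: lift $\OR$ and a dual object for its approximate degree through a masking function, use the generalized discrepancy bound with the lifted distribution, apply the BNS cube bound over the $k$ masking players, and then argue over a uniformly random $\phi$ that the ``collision'' parameter $q=2^km-|\bigcup\mathcal S|$ is rarely large, handling the degenerate cubes ($y_i^0=y_i^1$) by a union bound exactly as the paper does. Where you genuinely diverge is in how the inner quantity $\E_x\prod_u\psi(x|S_u)$ is controlled. The paper splits this into two lemmas proved by different means: a vanishing lemma (if $q<c\sqrt m\,2^k/2$ then the expectation is $0$, proved by pigeonholing a set $S_v$ with large boundary intersection, conditioning on the coordinates outside $S_v$, and invoking the orthogonality property of $(\mu,g)$) and a magnitude lemma ($\E_x\prod_u\mu(x|S_u)\le 2^{q-m2^k}$, proved by sequentially peeling off the sets and bounding the resulting marginals). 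You instead expand every factor in the Fourier basis and do a single parity count: surviving tuples must satisfy $\triangle_u S_u(T_u)=\emptyset$ and have all $|T_u|\ge d$, the unconstrained solution count is exactly $2^{2^km-|\bigcup\mathcal S|}$ (one $\mathbb F_2$ constraint per element of the range, on disjoint variables), and boundary elements force $|T_u|\le|S_u\setminus\partial\mathcal S|$, so nonemptiness already implies $q\ge 2^{k-1}d$ by double counting. Together with $|\psi_T|\le 2^{-m}$ (from $\ell_1$-normalization of the dual witness) this reproduces both of the paper's lemmas in one self-contained counting argument, which is arguably cleaner and makes the threshold $q\ge 2^{k-1}d$ transparent; the paper's version, by contrast, isolates the orthogonality property in a reusable ``Orthogonality Lemma'' form and keeps the probabilistic and analytic parts more modular. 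Your tail estimate $\Pr[q\ge c]\le\binom{2^km}{c}(2^km/n)^c$ is also a slightly more careful (union-bound) version of the paper's ``with replacement'' heuristic $\Pr[q=c]\le(2^km/n)^c$, and your parameter check ($2^km^2/(dn)=n^{-5(1-\delta)/8}$) is exactly where $\delta<1$ enters, matching the paper's $2m2^k/n<1/2$ computation. Two cosmetic points to fix in a final writeup: your normalization mixes $\E_z$ and $\sum_z$ (with $\sum_z|\psi(z)|=1$ the correlation guarantee is $\sum_z\psi(z)\OR(z)\ge\epsilon'$, while $|\psi_T|\le2^{-m}$ uses $\psi_T=\E_z[\psi\chi_T]$), and since the $S_u$ are $m$-subsets sampled without replacement the per-slot collision probability is $2^km/(n-m)$ rather than $2^km/n$ (or invoke the with-replacement domination as the paper does); neither affects the asymptotics.
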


\begin{proof}[Proof of Main Theorem~\ref{rp-vs-np-thm}
from Theorem~\ref{non-rp-thm}]
Consider the function $\phi$ whose existence is guaranteed
by Theorem~\ref{non-rp-thm}.
On the one hand, the Theorem implies that
$\Lift{\OR}{\phi} \notin \BPPcc{k+1}$.

On the other hand,
the following is a nondeterministic protocol
for $\Lift{\OR}{\phi}$: guess an index $i \in [1,n]$ using $\log{n}$ bits;
player 0 (the one holding $x$ on its forehead)
locally computes $\phi(y_1, \dots, y_k)$ and communicates a 1
if $i$ belongs to that set;
player 1 communicates a 1 if $x_i = 1$.
The cost of this protocol is $O(\log{n})$.
Easily, $\Lift{\OR}{\phi}(x,y_1, \dots, y_k) = -1$ iff
there exists a guess $i$ such that both players communicate a 1.
Thus, $\Lift{\OR}{\phi} \in \NPcc{k+1}$.
\end{proof}

\section{Proof of Main Result}

We obtain our lower bounds on the bounded-error
communication complexity of $\Lift{\OR}{\phi}$ using
an analysis that follows \cite{ca-setdisjointness}.
In their paper, Chattopadhyay and Ada analyze the
Set-Disjointness function,
and for that reason, their masking function $\phi$
must be the AND function.
In our case, intuitively, we allow $\phi$ to be a random function.
While our results no longer apply to Set-Disjointness,
we still obtain a separation between
\BPPcc{k} and \NPcc{k} because, no matter what masking function
is used, $\Lift{\OR}{\phi}$ always has a cheap nondeterministic protocol.

At a more technical level,
the results of \cite{ca-setdisjointness} become trivial when $k \ge \log\log{n}$
because of the relationship between $n$
(the size of the input to $F$) and $m$
(the number of bits the base function $\OR$ gets applied to.)
For their analysis to go through, they need $n = 2^{2^k}m^{O(1)}$.
In our case, $n = m^{O(1)}$ is sufficient, and this allows
our results to be non-trivial for $k \le \delta \log{n}$
for any $\delta < 1$.

\subsection{Overview of Proof}

As mentioned earlier, we will start with the base function
$f=\OR$ on $m$ input bits, $m < n$.
We lift the base function $f$ in order to obtain
the lifted function $F^{\phi} = \Lift{f}{\phi}$.
Recall that $F^{\phi}$ is a function on $(k+1)n$ inputs
with small nondeterministic complexity, and is obtained
by applying the base function (in this case the $\OR$ function)
to the unmasked bits of Player 0's input, $x$.
We want to prove that for a random $\phi$,
$F^{\phi}$ has high randomized communication complexity.

Paturi \cite{paturi-degree} proved
that no function that is a sum of
%representable by only
low-degree
Fourier characters can well-approximate the $\OR$ function. 
This implies that there exists a function $g$ (also on $m$ bits)
and a distribution $\mu$ over all $m$-bit inputs
such that the functions $g$ and $f=\OR$ are highly correlated over
$\mu$ and furthermore, $g$ is orthogonal to all small Fourier characters.
This is our Lemma~\ref{approx-orthog-lem},
and it was originally proved using duality
by Sherstov \cite{sherstov-quantum} in the 
context of 2-player lower bounds for quantum
communication complexity.

Now we lift the function $g$ in order to
get the function $G^{\phi} = \Lift{g}{\phi}$.
Define $\lambda$ to be a distribution over all $(k+1)n$-bit
inputs that is the natural extension of $\mu$.
Since $g$ and $f=\OR$ are highly correlated over $\mu$,
it is not hard to see (using the definitions and the fact that
$\lambda$ is the natural extension of $\mu$ 
to the lifted space) 
that the lifted versions, $F^{\phi}$
and $G^{\phi}$ are also highly correlated over $\lambda$.

By the generalized discrepancy method (Lemma~\ref{gen-disc-lem}),
in order to prove that the randomized complexity of
$F^{\phi}$ is high, it suffices
to prove that $G^{\phi}$ has small discrepancy.
This final step
is accomplished by Lemmas~\ref{small-q-lem}, \ref{gen-q-lem}, and \ref{prob-q-lem},
using two important properties of $g$ and $\phi$.
The crucial property of $g$ that we exploit is
that it is orthogonal to the space of all small Fourier characters.
This property will be used to prove Lemma~\ref{small-q-lem}.
Secondly, we want $\phi$ to behave like a random function
with respect to all sub-cubes. This second property is
exploited in order to prove Lemma~\ref{prob-q-lem}.
We now proceed with the formal proof.

\subsection{Proof of Main Theorem}

The following lemma is from \cite{sherstov-quantum}.
Intuitively it shows the following.
Let $f$ be a base function on $m$ bits,
and with the property that no function in the low-degree Fourier
subspace can
approximate $f$. (We
will be interested in $f=\OR$.)
The lemma states that this implies the existence of another function $g$
and a distribution $\mu$ such that $g$ is in the orthogonal
subspace of low-degree Fourier characters and $g$ well-approximates $f$.

\begin{lemma}[Orthogonality Lemma, Lemma~5.1 in \cite{ca-setdisjointness}]
  \label{approx-orthog-lem}
  If $f : \{0,1\}^m \rightarrow \{-1,1\}$ is a function
  with $\delta'$-approximate degree $d$,
  there exist a function $g : \{0,1\}^m \rightarrow \{-1,1\}$
  and a distribution $\mu$ on $\{0,1\}^m$ such that:
  \begin{itemize}
  \item[(i)] $\corr_\mu(g,f) \ge \delta'$; and
  \item[(ii)] for every $T \subseteq [1,m]$ with $|T| \le d$
  and every function $h : \{0,1\}^{|T|} \rightarrow \mathbb{R}$,
  $\E_{x \sim \mu}[g(x) \cdot h(x|T)] = 0$.
  \end{itemize}
\end{lemma}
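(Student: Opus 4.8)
The statement is a standard LP-duality fact from approximation theory, so the plan is to derive it from the linear-programming duality characterization of approximate degree. First I would set up the primal: the claim that $f$ has $\delta'$-approximate degree $d$ means that \emph{no} polynomial of degree at most $d$ approximates $f$ pointwise to within $\delta'$; equivalently, the optimum of the LP ``minimize $\max_x |f(x)-p(x)|$ over $p$ of degree $\le d$'' exceeds $\delta'$. I would phrase this as a linear feasibility problem over the vector space $V$ spanned by the low-degree Fourier characters $\{\chi_T : |T| \le d\}$ together with the constant function (or, more precisely, over degree-$\le d$ polynomials, which is the same span): the point $f$ is at $\ell_\infty$-distance $> \delta'$ from $V$.

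The key step is to invoke an appropriate separation / duality theorem. Since $f$ is $\ell_\infty$-far from the subspace $V$, strong LP duality (or a Hahn--Banach / hyperplane-separation argument in finitely many dimensions) yields a dual witness: a real function $\psi$ on $\{0,1\}^m$ with $\sum_x |\psi(x)| = 1$, with $\psi$ orthogonal to $V$, i.e. $\sum_x \psi(x)\chi_T(x) = 0$ for all $|T| \le d$, and with $\sum_x \psi(x) f(x) > \delta'$. This is precisely the dual form of approximate degree used in the quantum lower-bound literature (Sherstov, Razborov). From this $\psi$ I would read off both required objects by splitting $\psi$ into sign and magnitude: define $g(x) = \operatorname{sgn}(\psi(x))$ (breaking ties arbitrarily, say $g(x)=1$ where $\psi(x)=0$), which is $\{-1,1\}$-valued, and define the distribution $\mu(x) = |\psi(x)|$, which is a probability distribution since $\|\psi\|_1 = 1$. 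Then $\corr_\mu(g,f) = \sum_x \mu(x) g(x) f(x) = \sum_x |\psi(x)|\operatorname{sgn}(\psi(x)) f(x) = \sum_x \psi(x) f(x) > \delta'$, giving (i); and for any $T$ with $|T|\le d$, $\E_{x\sim\mu}[g(x)\chi_T(x)] = \sum_x \psi(x)\chi_T(x) = 0$.

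For the full strength of (ii) — orthogonality to $g(x)\cdot h(x|T)$ for an arbitrary function $h$ on the $T$-coordinates, not merely to the single character $\chi_T$ — I would note that any such $h(x|T)$ is itself a linear combination of characters $\chi_{T'}$ with $T' \subseteq T$, hence $|T'| \le |T| \le d$, so it lies in $V$; orthogonality of $\psi$ to all of $V$ (i.e. to every low-degree character) already delivers $\E_{x\sim\mu}[g(x) h(x|T)] = \sum_x \psi(x) h(x|T) = 0$ by linearity. This is why one must take the dual over the whole low-degree subspace rather than just over the monomials that appear in an optimal approximating polynomial.

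The main obstacle, and the only place genuine care is needed, is the duality step itself: one must verify that strict LP duality applies (the primal is a finite-dimensional linear program with a bounded feasible region, so there is no duality gap), and one must correctly normalize the dual witness so that $\|\psi\|_1 = 1$ rather than some other constant — a sloppy normalization would change the constant in (i) from $\delta'$ to something weaker. Everything else is bookkeeping: translating between the ``polynomial of degree $\le d$'' and ``Fourier-support of size $\le d$'' descriptions of $V$, and checking that $h(x|T)$ expands into low-degree characters. I would cite Sherstov~\cite{sherstov-quantum} (and note the parallel in~\cite{ca-setdisjointness}) for the exact duality lemma and then perform the sign/magnitude decomposition explicitly as above.
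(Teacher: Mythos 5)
Your proposal is correct and is essentially the canonical argument: this paper does not prove the lemma itself but imports it from Chattopadhyay--Ada (originally Sherstov), whose proof is exactly your LP-duality/dual-witness argument followed by the sign--magnitude split $g=\mathrm{sgn}(\psi)$, $\mu=|\psi|$, with (ii) reduced by linearity to orthogonality against the characters $\chi_{T'}$, $T'\subseteq T$. The only point to watch is the usual off-by-one convention in ``$\delta'$-approximate degree $d$'' (duality gives orthogonality to characters of size strictly below the approximate degree), which your reading of the hypothesis resolves in the way the statement intends and which is immaterial to the paper's application with $d=c\sqrt{m}$.
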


The next lemma is the generalized discrepancy lemma
from \cite{ca-setdisjointness}.
It states that if two functions $F$ and $G$ are highly correlated,
and if $G$ has small discrepancy (and hence high communication complexity),
then the communication complexity of $F$ is also high.

\begin{lemma}[Generalized Discrepancy Lemma, Lemma~3.2 in \cite{ca-setdisjointness}]
  \label{gen-disc-lem}
  Let $Z = Z_1 \times \dots \times Z_k$.
  Let $F, G : Z \rightarrow \{-1,1\}$
  and let $\lambda$ be a distribution on $Z$ such that
  $\corr_\lambda(G,F) \geq \delta'$.
  Then, for every $\epsilon' < \delta'/2$,
  \[
  R_{k,\epsilon'}(F) \ge \log \left( \frac{\delta' - 2 \cdot \epsilon'}{\disc_{k,\lambda}(G)} \right).
  \]
\end{lemma}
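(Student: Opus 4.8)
The final statement to prove is the Generalized Discrepancy Lemma: if $\corr_\lambda(G,F) \geq \delta'$ and $G$ has small discrepancy, then $R_{k,\epsilon'}(F)$ is large. Let me sketch how I would prove this.

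My plan is to combine the standard connection between randomized protocols and cylinder intersections with the correlation hypothesis. First I would pass from a randomized protocol to a deterministic one: a randomized protocol for $F$ of cost $c$ and error $\epsilon'$ yields, by fixing the shared randomness and averaging, a deterministic protocol $\Pi$ of cost $c$ such that $\E_{\overline{x}\sim\lambda}[F(\overline{x})\Pi(\overline{x})] \geq 1 - 2\epsilon'$ (here I view $\Pi$ as $\{-1,1\}$-valued; correctness with probability $1-\epsilon'$ under $\lambda$ translates to this correlation bound, and the best deterministic protocol in the support does at least as well as the average). So it suffices to lower-bound $c$ for any deterministic protocol $\Pi$ of cost $c$ with $\E_\lambda[F\Pi] \geq 1-2\epsilon'$.

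Next, the key structural fact: a deterministic protocol of cost $c$ partitions $Z$ into at most $2^c$ cylinder intersections $\Gamma^{(1)},\dots,\Gamma^{(t)}$, $t \le 2^c$, on each of which $\Pi$ is constant (equal to some $\sigma_j \in \{-1,1\}$). Now I would estimate the correlation of $G$ with this protocol. On one hand, $\corr_\lambda(G,F) \geq \delta'$ and $\E_\lambda[F\Pi]\geq 1-2\epsilon'$ together should force $\E_\lambda[G\Pi]$ to be not too small — specifically, writing $\E_\lambda[G\Pi] = \E_\lambda[GF] - \E_\lambda[G(F-\Pi)]$, and noting $|F-\Pi| \le 2 \cdot 1_{[F\neq\Pi]}$ while $|G|\le 1$, we get $\E_\lambda[G\Pi] \geq \delta' - 2\Pr_\lambda[F\neq\Pi] \geq \delta' - 2\epsilon'$ (using $\Pr_\lambda[F\neq\Pi] \le \epsilon'$, which follows from $\E_\lambda[F\Pi]\ge 1-2\epsilon'$). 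On the other hand, $\E_\lambda[G\Pi] = \sum_{j=1}^t \sigma_j \E_\lambda[G\, 1_{\Gamma^{(j)}}]$, and each term is bounded in absolute value by $\disc_{k,\lambda}(G)$ by definition of discrepancy. Hence $\delta' - 2\epsilon' \le t \cdot \disc_{k,\lambda}(G) \le 2^c \disc_{k,\lambda}(G)$, which rearranges to $c \geq \log\big((\delta'-2\epsilon')/\disc_{k,\lambda}(G)\big)$, exactly the claimed bound.

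The main obstacle — really the only delicate point — is the bookkeeping in the first step: carefully translating "randomized protocol with per-input error $\epsilon'$" into the averaged correlation statement $\E_\lambda[F\Pi]\ge 1-2\epsilon'$ for a single deterministic protocol, keeping the $\{0,1\}$-versus-$\{-1,1\}$ conventions straight, and making sure the error-to-correlation conversion $\Pr_\lambda[F\neq\Pi]\le\epsilon' \Leftrightarrow \E_\lambda[F\Pi]\ge 1-2\epsilon'$ is used consistently. Everything else is the standard "protocol tree $\Rightarrow$ at most $2^c$ cylinder intersections" argument plus the triangle inequality, both routine. I would also note that the hypothesis $\epsilon' < \delta'/2$ is exactly what makes the right-hand side's numerator positive, so the bound is non-vacuous.
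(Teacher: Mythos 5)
Your proposal is correct, and it is essentially the standard argument for the generalized discrepancy method: the paper itself does not reprove this lemma but imports it from \cite{ca-setdisjointness}, whose proof proceeds exactly as you describe (fix the randomness to get a deterministic protocol with distributional error at most $\epsilon'$ under $\lambda$, transfer the correlation from $F$ to the protocol, and then bound the protocol's correlation with $G$ by $2^c \cdot \disc_{k,\lambda}(G)$ via the partition into at most $2^c$ monochromatic cylinder intersections). No gaps; your bookkeeping of the error-to-correlation conversion and of the role of the hypothesis $\epsilon' < \delta'/2$ is accurate.
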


The following lemma is standard and used in every discrepancy argument.
See \cite{bns89,raz:bns,chung-tetali} for details.

\begin{lemma}[The standard BNS argument]
  \label{bns-lem}
  Let $Z = X \times Y_1 \times \dots \times Y_k$ and let $F : Z \rightarrow \{-1,1\}$.
  Let $\Gamma \subseteq Z$ be a cylinder intersection.
  We write $\overline{y}$ for $(y_1, \dots, y_k)$.
  Then,
  \[
  \bigg(
  \E_{x, \overline{y}} \left[ F(x, \overline{y}) 1_\Gamma(x, \overline{y}) \right]
  \bigg)^{2^k}
  \leq
  \E_{\overline{y}^0,\overline{y}^1}
  \left[ \left|
  \E_x \left[ \prod_{u \in \{0,1\}^k} F(x,y_1^{u_1}, \dots, y_k^{u_k}) \right]
  \right| \right].
  \]
\end{lemma}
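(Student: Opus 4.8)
The plan is to prove this exactly as one proves the classical BNS discrepancy bound: iterate the Cauchy--Schwarz inequality $k$ times, once for each coordinate $y_1,\dots,y_k$, carrying the cylinder indicators along and discarding them only at the very end. First I would write $\Gamma=\bigcap_{i=1}^{k}\Gamma_i$ and, correspondingly, $1_\Gamma(x,\overline y)=\prod_{i=1}^{k}\gamma_i(x,\overline y)$, where $\gamma_i=1_{\Gamma_i}$ is $\{0,1\}$-valued and, crucially, does not depend on the coordinate $y_i$. (If the cylinder intersection also carries a factor associated with player~$0$, that factor is independent of $x$ and stays harmless in the final step; I would note this in passing.) Set $D=\E_{x,\overline y}[F(x,\overline y)\,1_\Gamma(x,\overline y)]$, so the left-hand side of the claim is $D^{2^k}$.

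Then I would process the coordinates one at a time, maintaining the invariant that after step $i$ the indicators $\gamma_1,\dots,\gamma_i$ have been eliminated, while $F$ and the surviving indicators $\gamma_{i+1},\dots,\gamma_k$ each occur in $2^{i}$ copies, indexed by the vertices of the sub-cube already built on $y_1,\dots,y_i$, and the running bound reads $D^{2^{i}}\le \E[\cdots]$. To go from step $i-1$ to step $i$: every surviving copy of $\gamma_i$ is still independent of $y_i$ (all other $y_j$'s are either not yet doubled or irrelevant to this factoring), so I pull all of them outside $\E_{y_i}$; their product $\psi$ is again $\{0,1\}$-valued. Applying Cauchy--Schwarz in the form $(\E_b[\psi(b)\Psi(b)])^2\le \E_b[\psi(b)^2]\cdot\E_b[\Psi(b)^2]$, with $b$ ranging over $x$ together with all variables doubled so far and $\Psi$ the remaining average over $y_i$, squares the running bound, kills the $\gamma_i$-factors since $\E_b[\psi^2]=\E_b[\psi]\le 1$, and turns $(\E_{y_i}\Psi)^2$ into $\E_{y_i^0,y_i^1}[\Psi(y_i^0)\Psi(y_i^1)]$ --- i.e. it doubles the $y_i$ coordinate in every surviving copy of $F$ and of $\gamma_{i+1},\dots,\gamma_k$.

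After all $k$ steps every $\gamma_i$ is gone and we are left with
\[
D^{2^k}\ \le\ \E_{x,\overline y^{0},\overline y^{1}}\Big[\ \prod_{u\in\{0,1\}^k}F\big(x,y_1^{u_1},\dots,y_k^{u_k}\big)\Big].
\]
Writing the right-hand side as $\E_{\overline y^{0},\overline y^{1}}\big[\E_x[\prod_u F(x,y^u)]\big]$ and using $\E[\,\cdot\,]\le\E[\,|\cdot|\,]$ yields precisely the claimed inequality; note that $2^k$ is even for $k\ge 1$, so no absolute value is needed on the left.

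The only real work is the bookkeeping: checking at each step that the relevant indicator (and each of its copies) is genuinely independent of the coordinate about to be doubled, so it may be factored out before Cauchy--Schwarz, and verifying that the $k$ successive doublings assemble exactly into the product over $u\in\{0,1\}^k$. I expect the sole obstacle to be notational rather than conceptual --- this is the standard argument of \cite{bns89,chung-tetali}, and nothing delicate happens.
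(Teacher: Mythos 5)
Your proposal is correct and is exactly the standard iterated Cauchy--Schwarz (BNS) argument that the paper invokes by citation (\cite{bns89,raz:bns,chung-tetali}) without reproducing it; in particular you correctly note that the cylinder not depending on $x$ survives the $k$ doublings and is absorbed by the final absolute value over $\E_x$. No gaps beyond routine bookkeeping.
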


\medskip

Using the above lemmas, We will now prove Theorem~\ref{non-rp-thm}.
By \cite{paturi-degree}, $\deg_{5/6}(\OR) \ge c \sqrt{m}$
for some constant $c$.
By Lemma~\ref{approx-orthog-lem}, applied with $f = \OR$, there exist
a function $g$ and a distribution $\mu$ such that:
\begin{itemize}
\item[(i)] $\corr_\mu(g,\OR) \ge 5/6$; and
\item[(ii)] for every $T \subseteq [1,m]$ with $T \le c \sqrt{m}$
and every function $h : \{0,1\}^{|T|} \rightarrow \mathbb{R}$,
$\E_{x \sim \mu}[g(x) h(x|T)] = 0$.
\end{itemize}

For every masking function $\phi$,
let $F^\phi = \Lift{\OR}{\phi}$ and let $G^\phi = \Lift{g}{\phi}$.
As in \cite{ca-setdisjointness},
we define the distribution $\lambda$ on $\{0,1\}^{(k+1)n}$ as follows.
For $x \in \{0,1\}^n$ and
$\overline{y} = (y_1, \dots, y_k) \in \{0,1\}^{kn}$, let
\[
\lambda(x,\overline{y}) = \frac{\mu(x|\phi(\overline{y}))}{2^{(k+1)n-m}}.
\]
It can be easily verified that
$\corr_\lambda(G^\phi,F^\phi) = \corr_\mu(g,\OR) \ge 5/6$.
Thus, by Lemma~\ref{gen-disc-lem},
\[
R(F^\phi) \ge \log \left( \frac{5/6 - 2(1/3)}{\disc_\lambda(G^\phi)} \right)
= \log \left( \frac{1}{\disc_{\lambda}(G^\phi)} \right) - \Theta(1).
\]
Let $\Gamma$ be the cylinder intersection that witnesses
the discrepancy of $G^\phi$ under $\lambda$. Then,
\[
\disc_{\lambda}(G^\phi)
=
\disc_{\lambda}^\Gamma(G^\phi)
=
\left|
\E_{(x,\overline{y}) \sim \lambda} [ G^\phi(x,\overline{y}) 1_\Gamma(x,\overline{y}) ]
\right|
=
2^m
\left|
\E_{x,\overline{y}} [ \mu(x|\phi(\overline{y})) g(x|\phi(\overline{y})) 1_\Gamma(x,\overline{y}) ]
\right|
\]
where the last equality follows from the connection between $\lambda$
and the uniform distribution.
Finally, by Lemma~\ref{bns-lem}, we obtain
\[
\forall \phi,
\left( \disc_\lambda(G^\phi) \right)^{2^k}
\le
2^{m 2^k} \E_{\overline{y}^0,\overline{y}^1}
\left[
\left| \E_x\left[ \prod_{u \in \{0,1\}^k}
\mu(x|\phi(y_1^{u_1}, \dots, y_k^{u_k}))
g(x|\phi(y_1^{u_1}, \dots, y_k^{u_k})) \right] \right|
\right].
\]

It is at this point that we diverge from the analysis in \cite{ca-setdisjointness}.
Let $A = A(\overline{y}^0,\overline{y}^1)$ be the event
``$\exists i$ such that $y_i^0 = y_i^1$''.
Clearly, this event depends only on the choice of $\overline{y}^0$
and $\overline{y}^1$.
By a simple union bound,
$\Pr_{\overline{y}^0,\overline{y}^1}[A] \le k / 2^n = 2^{-n+log{k}}$.
Furthermore, $\Pr_{\overline{y}^0,\overline{y}^1}[\overline{A}] \le 1$,
and since $|\mu g| \le 1$,
$\mathbb{E}_{\overline{y}^0,\overline{y}^1}[\dots|\overline{A}] \le 1$.
Thus,
\[
\forall \phi, \left( \disc_\lambda(G^\phi) \right)^{2^k}
\le 2^{-n + m 2^k + \log{k}}
+ 2^{m 2^k} \mathbb{E}_{\overline{y}^0,\overline{y}^1}
\left[
\left| \mathbb{E}_x\left[ \prod_{u \in \{0,1\}^k}
\mu(x|\phi(y_1^{u_1}, \dots, y_k^{u_k}))
g(x|\phi(y_1^{u_1}, \dots, y_k^{u_k})) \right] \right|
\big| \overline{A}
\right].
\]

For the remaining part of the analysis,
we fix the choices of $\overline{y}^0$ and $\overline{y}^1$
in such a way that the event $A$ does not occur.
For $u \in \{0,1\}^k$, define
$S_u = S_u(\overline{y}^0,\overline{y}^1,\phi) = \phi(y_1^{u_1}, \dots, y_k^{u_k})$.
Let $\mathcal{S} = \mathcal{S}(\overline{y}^0,\overline{y}^1,\phi)$
be the multi-set $( S_u : u \in \{0,1\}^k )$.
Even though the sets $S_u$ and the multi-set $\mathcal{S}$ depend
on $\overline{y}^0, \overline{y}^1$ and $\phi$,
we will usually omit explicitly indicating this dependence
in our proofs in order to reduce the clutter.
We define \emph{the number of conflicts in $\mathcal{S}$}
to be $q(\mathcal{S}) = m 2^k - |\bigcup \mathcal{S}|$.
Intuitively, $|\bigcup \mathcal{S}|$ measures the range of $\mathcal{S}$,
while $m 2^k$ is the maximum possible value for this range.

We use the following three Lemmas to complete our proof.

\begin{lemma}
\label{small-q-lem}
  For every $\overline{y}^0, \overline{y}^1$ and $\phi$,
  if $\overline{A}(\overline{y}^0, \overline{y}^1)$
  and $q(\mathcal{S}(\overline{y}^0,\overline{y}^1,\phi)) < c \cdot \sqrt{m} \cdot 2^k / 2$,
  then
  \[
  \E_x\left[ \prod_{u \in \{0,1\}^k}{\mu(x|S_u(\overline{y}^0,\overline{y}^1,\phi))g(x|S_u(\overline{y}^0,\overline{y}^1,\phi))} \right] = 0.
  \]
\end{lemma}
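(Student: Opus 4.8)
The plan is to combine two ingredients: a counting argument showing that when the number of conflicts $q(\mathcal{S})$ is small, some set $S_{u^*}$ in the collection $\mathcal{S}$ is almost entirely ``private'' (most of its indices appear in no other $S_u$); and property (ii) of $g$ (orthogonality to the low-degree Fourier subspace), which forces the relevant inner expectation to vanish once the non-private part of $S_{u^*}$ has size below $c\sqrt{m}$.

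First I would run the counting. For $j \in [1,n]$ let $t_j = |\{u \in \{0,1\}^k : j \in S_u\}|$, so that $\sum_j t_j = m 2^k$ and $|\bigcup\mathcal{S}| = |\{j : t_j \ge 1\}|$, whence $q(\mathcal{S}) = \sum_{j \in \bigcup\mathcal{S}}(t_j - 1)$. Call $j \in S_u$ \emph{shared} if $t_j \ge 2$, and let $c_u$ be the number of shared indices of $S_u$. Counting shared (index, set) incidences in two ways,
\[
\sum_{u} c_u \;=\; \sum_{j:\,t_j \ge 2} t_j \;\le\; 2\sum_{j:\,t_j \ge 2}(t_j - 1) \;\le\; 2\,q(\mathcal{S}) \;<\; c\sqrt{m}\,2^k,
\]
using $t \le 2(t-1)$ for $t \ge 2$ and the hypothesis $q(\mathcal{S}) < c\sqrt{m}\,2^k/2$. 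Averaging over the $2^k$ sets gives some $u^*$ with $c_{u^*} < c\sqrt{m}$. Set $T = \{j \in S_{u^*} : t_j \ge 2\}$, so $|T| = c_{u^*} < c\sqrt{m}$, and $P = S_{u^*} \setminus T = \{j \in S_{u^*} : t_j = 1\}$; every index of $P$ lies in $S_{u^*}$ and in no other set of $\mathcal{S}$.

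Next I would evaluate the expectation by conditioning on the coordinates outside $P$. Since indices outside $\bigcup\mathcal{S}$ do not occur in the product, $\E_x$ may be taken over $x$ restricted to $\bigcup\mathcal{S} = R \sqcup P$, where $R = \bigcup_{u \ne u^*} S_u$, and by construction $P \cap R = \emptyset$, $T \subseteq R$, and $S_{u^*} = T \sqcup P$. Conditioning on $x|R$ (which determines $x|T$), every factor with $u \ne u^*$ is a constant, so the full expectation equals $\E_{x|R}$ of that constant times the inner expectation $\E_{x|P}\big[\mu(x|S_{u^*})\,g(x|S_{u^*})\big]$. Identifying the index set $S_{u^*}$ with $[1,m]$ and $T$ with a subset $T' \subseteq [1,m]$ of size below $c\sqrt{m}$, this inner expectation equals $2^{-|P|}\sum_{w \in \{0,1\}^m} \mu(w)\,g(w)\,h(w|T')$, where $h : \{0,1\}^{|T'|} \to \mathbb{R}$ is $1$ on the string $x|T$ and $0$ elsewhere. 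By property (ii) of $g$ applied with $T = T'$, this equals $\E_{w \sim \mu}[g(w)\,h(w|T')] = 0$. Since this holds for every value of $x|R$, the whole expectation is $0$.

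The main obstacle is really just the counting step — making precise that ``few conflicts'' forces one set to be mostly private, and checking that the extracted $T$ has size strictly below the Paturi threshold $c\sqrt{m}$ so that the orthogonality of $g$ applies verbatim. The hypothesis $\overline{A}$ plays no essential role in this lemma beyond what is already implied by $q(\mathcal{S})$ being small; it is carried along only for uniformity with the companion lemmas. Everything after the choice of $u^*$ is a routine factorization of a uniform expectation into independent blocks.
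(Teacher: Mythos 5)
Your proof is correct and follows essentially the same route as the paper: a counting/pigeonhole step producing a set $S_{u^*}$ with fewer than $c\sqrt{m}$ non-boundary (shared) indices, followed by an application of property (ii) of $g$ and $\mu$ to annihilate the expectation. The only cosmetic differences are that you count shared incidences directly instead of using the range/boundary identities $r+q=m2^k$, $r-b\le q$, and that you invoke property (ii) with an indicator function $h$ after conditioning on $x|R$, whereas the paper takes $h$ to be the averaged product over the coordinates outside $S_v$; your aside that $\overline{A}$ is not actually needed here matches the paper's proof, which likewise never uses it.
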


\begin{lemma}
\label{gen-q-lem}
  For every $\overline{y}^0, \overline{y}^1$ and $\phi$,
  if $\overline{A}(\overline{y}^0, \overline{y}^1)$,
  \[
  \E_x\left[ \prod_{u \in \{0,1\}^k}{\mu(x|S_u(\overline{y}^0,\overline{y}^1,\phi))} \right]
  \le
  \frac{2^{q(\mathcal{S}(\overline{y}^0,\overline{y}^1,\phi))}}{2^{m \cdot 2^k}}.
  \]
\end{lemma}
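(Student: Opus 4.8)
The quantity to bound is $\E_x\left[ \prod_{u \in \{0,1\}^k}{\mu(x|S_u)} \right]$, where $\mathcal{S} = (S_u : u \in \{0,1\}^k)$ is the multi-set of $2^k$ masks, each of size $m$, so the total count $\sum_u |S_u| = m \cdot 2^k$. The key structural observation is that as a function of $x$, the product $\prod_u \mu(x|S_u)$ depends on $x$ only through the coordinates in $\bigcup \mathcal{S}$, and more precisely, it factors according to how the sets $S_u$ overlap. My plan is to expand the uniform expectation over $x \in \{0,1\}^n$ as an expectation over $x$ restricted to $\bigcup \mathcal{S}$ (the other coordinates being irrelevant), then compare against the product of the individual expectations.

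First I would record the normalization fact that for each single set $S_u$, since $\mu$ is a probability distribution on $\{0,1\}^m$ and $x|S_u$ is uniform over $\{0,1\}^m$ when $x$ is uniform over $\{0,1\}^n$, we have $\E_x[\mu(x|S_u)] = \E_{w \sim \text{unif}}[\mu(w)] = 2^{-m} \sum_{w} \mu(w) = 2^{-m}$. Next, the main step: I would write $\E_x[\prod_u \mu(x|S_u)] = \sum_{x' \in \{0,1\}^{|\bigcup\mathcal{S}|}} 2^{-|\bigcup\mathcal{S}|} \prod_u \mu(x'|S_u)$, pulling out the factor from the irrelevant coordinates. Since every $\mu(x'|S_u) \ge 0$ (it is a probability), and each factor $\mu(x'|S_u) \le 1$, I can bound the sum over $x'$ as follows. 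The cleanest route is: each term $\prod_u \mu(x'|S_u)$ is nonnegative, and summing over $x'$ while using $\mu(x'|S_u) \le 1$ for all but one carefully chosen index per summation "direction" — but the slickest argument is to note that summing over the coordinates of $x'$ lying outside $\bigcup_{u' \ne u} S_{u'}$, i.e. the coordinates private to $S_u$, introduces a factor that telescopes. Concretely: order the sets and peel them off one at a time; when peeling $S_u$, sum over the coordinates that lie in $S_u$ but in no earlier set, using $\sum_{(\text{values on those coords})} \mu(x'|S_u) \le \sum_{w \in \{0,1\}^m}\mu(w) = 1$ after the remaining coordinates of $S_u$ are fixed. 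Each such peeling covers all coordinates of $\bigcup\mathcal{S}$ exactly once and contributes a bound of $1$.

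Putting the counting together: $\E_x[\prod_u \mu(x|S_u)] = 2^{-|\bigcup\mathcal{S}|} \sum_{x'} \prod_u \mu(x'|S_u) \le 2^{-|\bigcup\mathcal{S}|} \cdot 1 = 2^{-(m 2^k - q(\mathcal{S}))} = 2^{q(\mathcal{S})} / 2^{m 2^k}$, which is exactly the claimed bound. I expect the main obstacle to be making the peeling/telescoping argument rigorous, namely verifying that the coordinates can be partitioned so that each round of summation legitimately bounds one distribution-factor by $1$; the bookkeeping is routine but must be stated carefully, and one should double-check that the hypothesis $\overline{A}$ (distinctness of the $y_i^0, y_i^1$) is not actually needed here — indeed the bound holds for arbitrary multi-sets of $m$-element sets, and $\overline{A}$ only matters in Lemma~\ref{small-q-lem} and the probabilistic Lemma~\ref{prob-q-lem}, so I would simply carry the hypothesis along to match the lemma as stated.
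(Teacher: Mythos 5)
Your proposal is correct and follows essentially the same route as the paper: both peel the sets off one at a time, bounding the sum (or average) of $\mu$ over the coordinates of the current set that are new relative to the earlier sets by $1$ (using only that $\mu$ is a nonnegative distribution), which is exactly the paper's bound $\max(H_j) \le 2^{|S_j \setminus \partial \mathcal{S}_j| - m}$ in expectation form. Your bookkeeping---normalizing by $2^{-|\bigcup\mathcal{S}|}$ up front instead of tracking the identity $\sum_j |S_j \setminus \partial\mathcal{S}_j| = q(\mathcal{S})$---is a cosmetic simplification, and your observation that the hypothesis $\overline{A}$ is not needed here matches the paper, whose proof likewise never uses it.
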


\begin{lemma}
\label{prob-q-lem}
  For every $\overline{y}^0, \overline{y}^1$,
  if $\overline{A}(\overline{y}^0, \overline{y}^1)$,
  when $\phi$ is chosen at random,
  \[
  \Pr_\phi[ q(\mathcal{S}(\overline{y}^0,\overline{y}^1,\phi)) = q | \overline{A}(\overline{y}^0, \overline{y}^1) ]
  \le
  \left( \frac{m \cdot 2^k}{n} \right)^q.
  \]
\end{lemma}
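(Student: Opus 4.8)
The crucial first step is to pin down the distribution of $\mathcal{S}(\overline{y}^0,\overline{y}^1,\phi)$ when $\phi$ is chosen uniformly at random, given $\overline{A}(\overline{y}^0,\overline{y}^1)$. The event $\overline{A}$ says exactly that $y_i^0 \neq y_i^1$ for every $i \in [1,k]$, so the map $u \mapsto (y_1^{u_1},\dots,y_k^{u_k})$ is injective: two distinct $u, u' \in \{0,1\}^k$ disagree in some coordinate $i$, and there the two tuples carry the distinct strings $y_i^0$ and $y_i^1$. Hence $\phi$ is evaluated at $2^k$ \emph{pairwise distinct} points in order to form the sets $S_u = \phi(y_1^{u_1},\dots,y_k^{u_k})$. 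Since a uniformly random masking function takes independent uniform values (in the collection of $m$-element subsets of $[1,n]$) on distinct inputs, the family $(S_u)_{u \in \{0,1\}^k}$ is a collection of $2^k$ i.i.d.\ uniformly random $m$-subsets of $[1,n]$; the conditioning on $\overline{A}$ only constrains the fixed $\overline{y}^0,\overline{y}^1$ and leaves this distribution untouched. Recalling that $q(\mathcal{S}) = m 2^k - |\bigcup \mathcal{S}|$, it now suffices to show that $2^k$ i.i.d.\ uniform $m$-subsets of $[1,n]$ have union of size exactly $m 2^k - q$ with probability at most $(m 2^k / n)^q$.

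I would establish this by a ``reveal one element at a time'' argument. Fix any order on $\{0,1\}^k$ and reveal the $2^k$ sets in that order, and within each set reveal its $m$ elements one at a time, for a total of $m 2^k$ element-occurrences. Call an occurrence a \emph{collision} if the element revealed there already appeared in an earlier-revealed set; since the $m$ elements of a single $S_u$ are distinct, the number of collisions equals $q(\mathcal{S})$ exactly. The key estimate is that when the $l$-th element of the $j$-th set is revealed, then, conditioned on everything revealed so far, it is uniform over the $n-(l-1) \ge n-m$ still-unused positions, while the set of positions causing a collision --- the union of the $j-1$ previously-revealed sets --- has size at most $m(j-1) < m 2^k$; so the conditional probability of a collision at this occurrence is at most $m 2^k / (n-m)$, which is $(1+o(1))\cdot m 2^k/n$. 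Iterating this conditional bound, for any fixed set of $q$ occurrences the probability that all of them are collisions is at most $(m 2^k/n)^q$ (up to the lower-order factors I will track).

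Finally, $q(\mathcal{S}) = q$ forces \emph{some} set of $q$ occurrences to be all collisions, so a union bound over the possible locations of those occurrences finishes the proof; this last step is where I expect the real work. Revealing the sets incrementally and bounding, for the $j$-th set, the probability that it contributes a prescribed number $r_j$ of collisions by the hypergeometric tail $\binom{m}{r_j}\,(m 2^k/n)^{r_j}$, and then summing over all compositions $r_1 + \dots + r_{2^k} = q$, one gets a bound of the shape $\binom{m 2^k}{q}\,(m 2^k/n)^q$. Removing the binomial factor to match the stated form --- or, alternatively, checking that this weaker bound is all that the proof of Theorem~\ref{non-rp-thm} actually needs --- is the delicate point: indeed Lemma~\ref{prob-q-lem} is only ever invoked with $q \ge c\sqrt{m}\,2^k/2$, and in that deep-tail range the factor $(m 2^k/n)^q$ is already $n^{-\Omega(q)}$, which overwhelms any such combinatorial correction, so even the weaker bound suffices downstream.
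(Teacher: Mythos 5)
Your proposal is correct, and at its core it follows the same route as the paper: condition on $\overline{A}$ so that the $2^k$ evaluation points of $\phi$ are distinct, observe that the sets $S_u$ are then independent uniformly random $m$-subsets of $[1,n]$, reveal their $m2^k$ elements one at a time, and bound the conditional probability that a given occurrence repeats an already-seen index by (roughly) $m2^k/n$. The difference is the last step. The paper passes to sampling the elements \emph{with} replacement and then simply asserts that ``the probability of obtaining $q$ conflicts is at most $(m2^k/n)^q$,'' i.e.\ it states the clean bound with no union bound over which occurrences are the collisions --- precisely the step you flag as delicate. Your more careful accounting, giving $\Pr_\phi[q(\mathcal{S}) \ge q \mid \overline{A}] \le \binom{m2^k}{q}(m2^k/n)^q$ (up to the negligible $n/(n-m)$ correction), is what this style of argument actually delivers, and your caution is warranted: when $\delta > 1/3$ the expected number of conflicts is about $(m2^k)^2/2n = n^{\Omega(1)}$, and for $q$ near that value $\Pr[q(\mathcal{S})=q]$ is only polynomially small while $(m2^k/n)^q$ is superpolynomially small, so the equality-form bound without the binomial factor cannot be obtained this way (and is itself dubious for such $q$). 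Your fallback check is also right and is worth making explicit: the lemma is only invoked for $q \ge c\sqrt{m}\,2^k/2$, where $\binom{m2^k}{q} \le (em2^k/q)^q \le (2e\sqrt{m}/c)^q$, and since $m^{3/2}2^k/n \le n^{1.5\epsilon+\delta-1} = n^{-5\epsilon/2}$, the geometric-series step in the proof of Theorem~\ref{non-rp-thm} goes through verbatim (the ratio is still $o(1) < 1/2$), yielding the same $2^{-\Omega(\sqrt{m}\,2^k)}$ bound on the expected discrepancy. So you are not missing an idea; you prove a slightly weaker inequality than the one literally stated, which is all the application uses, and in doing so you supply the union-bound bookkeeping that the paper's own one-line argument elides.
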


Before proving these Lemmas, we complete the proof of our main Theorem.
Since the bound on $\disc_\lambda(G^\phi)$ holds for every $\phi$,
we can write
%\begin{eqnarray*}
\[
\E_\phi \left[
\left( \disc_\lambda(G^\phi) \right)^{2^k}
\right]
%&
\le
%&
2^{-n + m 2^k + \log{k}}
+
2^{m 2^k}
\E_{\overline{y}^0,\overline{y}^1,\phi}
\left[ \left|\mathbb{E}_x\left[ \prod_{u \in \{0,1\}^k}{\mu(x|S_u)g(x|S_u)} \right]\right| \big| \overline{A} \right].
\]
Moreover,
\begin{eqnarray*}
& & \E_{\overline{y}^0,\overline{y}^1,\phi}
\left[ \left| \E_x \left[ \prod_{u \in \{0,1\}^k}{\mu(x|S_u)g(x|  S_u )} \right] \right| \big|  \overline{A}  \right]\\
~~~~~~&
\le
&
\sum_{q \ge 0}
\Pr_{\phi}[q(\mathcal{S}) = q| \overline{A} ]
\mathbb{E}_{\overline{y}^0,\overline{y}^1,\phi}
\left[ \left| \mathbb{E}_x\left[ \prod_{u \in \{0,1\}^k}{\mu(x|S_u)g(x|S_u)} \right] \right| \big|
\overline{A},q(\mathcal{S}) = q \right]\\
\text{(by Lemma~\ref{small-q-lem}) }
&
\le
&
\sum_{q \ge c \sqrt{m} 2^k/2}
\Pr_{\phi}[ q(\mathcal{S}) = q | \overline{A}]
\E_{\overline{y}^0,\overline{y}^1,\phi}
\left[ \left| \E_x\left[ \prod_{u \in \{0,1\}^k}{\mu(x|S_u)g(x|S_u)} \right] \right| \big|
\overline{A},q(\mathcal{S}) = q \right]\\
\text{(because $|g|=1$) }
&
\le
&
\sum_{q \ge c \sqrt{m} 2^k/2}
\Pr_{\phi}[q(\mathcal{S}) = q | \overline{A}]
\E_{\overline{y}^0,\overline{y}^1,\phi}
\left[ \left| \E_x\left[ \prod_{u \in \{0,1\}^k}{\mu(x|S_u)} \right] \right| \big|
\overline{A},q(\mathcal{S}) = q \right]\\
\text{(by Lemma~\ref{gen-q-lem}) }
&
\le
&
\sum_{q \ge c \sqrt{m} 2^k/2}
\Pr_{\phi}[q(\mathcal{S}) = q|\overline{A}]
\frac{2^q}{2^{m 2^k}}\\
\text{(by Lemma~\ref{prob-q-lem}) }
&
\le
&
\sum_{q \ge c \sqrt{m} 2^k/2}
\left( \frac{m 2^k}{n} \right)^q
\frac{2^q}{2^{m 2^k}}\\
&
=
&
\frac{1}{2^{m 2^k}}
\sum_{q \ge c \sqrt{m} 2^k/2}
\left( \frac{2 m 2^k}{n} \right)^q.
\end{eqnarray*}
We have chosen $\epsilon = (1-\delta)/4$,
so $1-\epsilon-\delta = 3\epsilon$.
Furthermore, $m = n^\epsilon$ and $k \le \delta \log{n}$,
so $m 2^k / n \le n^{-1+\epsilon+\delta} = n^{-3 \epsilon} < 1/4$
when $n$ is large enough.
Thus, $2 m 2^k / n < 1/2$.
Using $\sum_{q \ge q_0}{w^q} = w^{q_0} / (1-w) \le 2 w^{q_0}$
for $w < 1/2$, we obtain
\[
\E_{\overline{y}^0,\overline{y}^1,\phi}
\left[ \left|\E_x\left[ \prod_{u \in \{0,1\}^k}{\mu(x|S_u)g(x|S_u)} \right]\right| \big| \overline{A} \right]
\le
\frac{2^{1 - c\sqrt{m} 2^k/2}}{2^{m 2^k}}.
\]
Putting everything together,
\[
\E_\phi \left[
\left( \disc_\lambda(G^\phi) \right)^{2^k}
\right]
\le
2^{-n + m 2^k + \log{k}}
+
2^{m 2^k} 2^{-m 2^k} 2^{1 - c\sqrt{m}2^k/2}.
\]
For the exponent of the first term,
note that $\log{k} \le m 2^k$
and $n \ge 4 m 2^k$, so $-n + m 2^k + \log{k} \le -2 m 2^k$.
When $m$ is large enough,
$-2 m 2^k \le - c \sqrt{m} 2^k / 4$.
For the exponent of the second term,
note that $1 \le c \sqrt{m} 2^k / 4$ when $m$ is large enough,
so $1- c \sqrt{m} 2^k / 2 \le -c \sqrt{m} 2^k / 4$.
Thus, the sum of the two terms
is at most $2^{1-c \sqrt{m} 2^k /4}$.
When $m$ is large enough, $1 \le c \sqrt{m} 2^k / 8$,
so
\[
\E_\phi \left[
\left( \disc_\lambda(G^\phi) \right)^{2^k}
\right]
\le
2^{-c \sqrt{m} 2^k/8}.
\]
Therefore, there exists some $\phi$ such that
$\disc_\lambda(G^\phi) \le 2^{-c \sqrt{m} /8}$.
For this $\phi$,
\[
R(F^\phi) \ge \log \left( \frac{1}{\disc_\lambda(G^\phi)} \right) - \Theta(1)
\ge \Theta(1) \sqrt{m} = \Theta(1) n^\epsilon \ge n^{\Omega(1)}.
\]

\section{Proofs of Lemmas}

\begin{proof}[Proof of Lemma~\ref{small-q-lem}]
We write $S_u$ for $S_u(\overline{y}^0,\overline{y}^1,\phi)$
and $\mathcal{S}$ for $\mathcal{S}(\overline{y}^0,\overline{y}^1,\phi)$.
Assume $q(\mathcal{S}) < c \sqrt{m}2^k/2$.
Let $r(\mathcal{S}) = |\bigcup \mathcal{S}|$ be the size of the range of $\mathcal{S}$,
and let $b(\mathcal{S}) = |\partial \mathcal{S}|$ be the size
of the boundary of $\mathcal{S}$.
Note that $r(\mathcal{S}) - b(\mathcal{S}) \le q(\mathcal{S})$
because every $j \in \cup \mathcal{S} \setminus \partial \mathcal{S}$
occurs in at least 2 sets in $\mathcal{S}$,
thus contributes at least 1 to $q(\mathcal{S})$.
Furthermore, $r(\mathcal{S}) + q(\mathcal{S}) = m 2^k$.
Then, $b(\mathcal{S}) \ge r(\mathcal{S}) - q(\mathcal{S})
= m 2^k - 2 q(\mathcal{S}) > (m - c \sqrt{m})2^k$.
There are $2^k$ sets in the multi-set $\mathcal{S}$
so by the pigeonhole principle,
there exists $v$ such that $|S_v \cap \partial \mathcal{S}| > m - c \sqrt{m}$.
We can write
\[
\E_x \left[ \prod_{u \in \{0,1\}^k}{\mu(x|S_u) g(x|S_u)} \right]
=
\E_{x|S_v} \left[ \mu(x|S_v) g(x|S_v)
\E_{x|[1,n] \setminus S_v} \left[
\prod_{u \in \{0,1\}^k, u \neq v}{\mu(x|S_u) g(x|S_u)} \right] \right].
\]
Let $T = S_v \setminus \partial \mathcal{S}$.
So $|T| \le c \sqrt{m}$.
Let $h = \E_{x|[1,n] \setminus S_v} \left[
\prod_{u \neq v}{\mu(x|S_u) g(x|S_u)} \right]$.
Note that $h$ is a function that depends only on $x|T$.
Then, by the property (ii) of $g$ and $\mu$,
$\E_{x|S_v} [ \mu(x|S_v) g(x|S_v) h(x|T) ] = 0$.
\end{proof}

\begin{proof}[Proof of Lemma~\ref{gen-q-lem}]
We write $S_u$ for $S_u(\overline{y}^0,\overline{y}^1,\phi)$
and $\mathcal{S}$ for $\mathcal{S}(\overline{y}^0,\overline{y}^1,\phi)$.
We see that
\[
\E_x \left[ \prod_{u \in \{0,1\}^k}{\mu(x|S_u)} \right]
=
\E_{x|[1,n] \setminus \bigcup \mathcal{S}}
\left[ \E_{x|\bigcup \mathcal{S}} \left[ \prod_{u \in \{0,1\}^k}{\mu(x|S_u)} \right] \right]
=
\E_{x|\bigcup \mathcal{S}} \left[ \prod_{u \in \{0,1\}^k}{\mu(x|S_u)} \right].
\]
Every $u \in \{0,1\}^k$ can be interpreted as an integer in the range $[0,2^k-1]$.
With this in mind,
for $0 \le j \le 2^k-1$,
let $\mathcal{S}_j$ be the sub-multi-set of $\mathcal{S}$
consisting of the sets up to and including $S_j$,
$\mathcal{S}_j = ( S_0, \dots, S_j )$.
So, $\mathcal{S} = \mathcal{S}_{2^k-1}$.
Define $\mathcal{S}_{-1} = \emptyset$.
For $0 \le j \le 2^k-1$, let
$G_j = \E_{x|\bigcup \mathcal{S}_j}[ \prod_{i=0}^{j}{\mu(x|S_i)} ]$
and let
$H_j(x|S_j \setminus \partial \mathcal{S}_j)
= \E_{x|S_j \cap \partial \mathcal{S}_j} [ \mu(x|S_j) ]$.
Letting $G_{-1} = 1$, observe that, for $0 \le j \le 2^k-1$,
\[
G_j = \E_{x|\bigcup \mathcal{S}_{j-1}}
\left[ \left( \prod_{i=0}^{j-1}{\mu(x|S_i)} \right) H_j(x|S_j \setminus \partial \mathcal{S}_j) \right]
\le (\max(H_j)) \cdot G_{j-1}.
\]
To obtain a bound on $\max(H_j)$, consider
an arbitrary partition of $[1,m]$ into two sets $E, F$.
Let $\nu$ be a distribution on $[1,m]$,
and let $\rho(x|E) = \mathbb{E}_{x|F}[\nu(x)]$.
Then, $\rho(x|E) = \sum_{x|F}{2^{-|F|} \nu(x)}
= 2^{-|F|} \sum_{x|F}{\nu(x)} \le 2^{-|F|} = 2^{|E|-m}$,
simply using the fact that $\nu$ is a probability distribution.
Thus, $\max(H_j) \le 2^{|S_j \setminus \partial \mathcal{S}_j|-m}$.
Inductively,
\[
\mathbb{E}_x \left[ \prod_{i=0}^{2^k-1}{\mu(x|S_i)} \right] =
G_{2^k-1}
\le
\frac{2^{\sum_{j=0}^{2^k-1}|S_j \setminus \partial \mathcal{S}_j|}}{2^{m 2^k}}.
\]
Consider some index $z \in \bigcup \mathcal{S}$.
Suppose this index appears in $l$ sets
$S_{j_1}, \dots, S_{j_l}$ from $\mathcal{S}$,
with $j_1 < \dots < j_l$.
Then, this index contributes exactly $l-1$ to the expression
$\sum_{j=0}^{2^k-1}|S_j \setminus \partial \mathcal{S}_j|$,
once for every $j = j_2, \dots, j_l$
(for $j = j_1$, $z \in \partial \mathcal{S}_j$ because no set
before $S_j$ contains $z$.)
Since this holds for every index $z$,
we see that $\sum_{j=0}^{2^k-1}|S_j \setminus \partial \mathcal{S}_j| = q(\mathcal{S})$
and therefore
$\E_x[ \prod_{u \in \{0,1\}^k}{\mu(x|S_u)} ] \le 2^{q(\mathcal{S})-m 2^k}$.
\end{proof}

\begin{proof}[Proof of Lemma~\ref{prob-q-lem}]
Fix $\overline{y}^0, \overline{y}^1$ such that $\overline{A}$.
The multi-set $\mathcal{S}$ is constructed from the sets
$S_u = \phi(y_1^{u_1}, \dots, y_k^{u_k})$
for $u \in \{0,1\}^k$.
Since $A$ did not occur, the $2^k$ points where $\phi$
gets evaluated are distinct.
Furthermore, $\phi$ is chosen at random,
which is equivalent to choosing $2^k$ random $m$-element subsets of $[1,n]$.
We can overestimate the number of conflicts in $\mathcal{S}$ as follows.
Instead of choosing, for each subset, $m$ elements from $[1,n]$
\emph{without} replacement,
suppose we chose them \emph{with} replacement.
The number of conflicts we will obtain can only be larger than
in the original experiment or, equivalently,
the probability of obtaining a fixed number of conflicts
can only be greater in the second experiment.
The maximum range of $\mathcal{S}$ is $m 2^k$.
Every conflict in $\mathcal{S}$ arises when we select a previously selected
point from $[1,n]$.
Thus, the probability of each conflict is independently at most $m 2^k/n$.
The probability of obtaining $q$ conflicts is at most $(m 2^k/n)^q$.
\end{proof}

\bibliography{theory,toni,josh,nate}

\def\bibRCS{$Id: theory.bib,v 1.386 2001/10/08 23:03:17 beame Exp $}
  \makeatletter \@ifundefined{ccisdefined}{ \newcommand{\cc}[1]{\mbox{\it
  #1\/}} \newcommand{\ccisdefined}{} }{} \@ifundefined{journalfont}{
  \newcommand{\journalfont}{\it } }{} \makeatother \def\bibRCS{$Id:
  theory.bib,v 1.386 2001/10/08 23:03:17 beame Exp $} \makeatletter
  \@ifundefined{ccisdefined}{ \newcommand{\cc}[1]{\mbox{\it #1\/}}
  \newcommand{\ccisdefined}{} }{} \@ifundefined{journalfont}{
  \newcommand{\journalfont}{\it } }{} \makeatother \def\bibRCS{$Id:
  theory.bib,v 1.386 2001/10/08 23:03:17 beame Exp $} \makeatletter
  \@ifundefined{ccisdefined}{ \newcommand{\cc}[1]{\mbox{\it #1\/}}
  \newcommand{\ccisdefined}{} }{} \@ifundefined{journalfont}{
  \newcommand{\journalfont}{\it } }{} \makeatother
\begin{thebibliography}{10}

\bibitem{bfs86}
L{\'a}szl{\'o} Babai, P.~Frankl, and Janos Simon.
\newblock Complexity classes in communication complexity theory.
\newblock In {\em 27th Annual Symposium on Foundations of Computer Science},
  pages 337--347, Toronto, Ontario, October 1986. IEEE.

\bibitem{bns89}
L{\'a}szl{\'o} Babai, Noam Nisan, and M{\'a}ri{\'o} Szegedy.
\newblock Multiparty protocols, pseudorandom generators for logspace, and
  time-space trade-offs.
\newblock {\em Journal of Computer and System Sciences}, 45(2):204--232,
  October 1992.

\bibitem{bdpw-nof}
P.~Beame, M.~David, T.~Pitassi, and P.~Woelfel.
\newblock Separating deterministic from nondeterministic nof multiparty
  communication complexity.
\newblock In {\em ICALP}, pages 134--145, 2007.

\bibitem{bps:setdisj}
P.~Beame, P.~Pitassi, and N.~Segerlind.
\newblock Lower bounds for lovasz-schrijver systems and beyond follow from
  multiparty communication complexity.
\newblock In {\em Proceedings from Thirty-second ICALP}. IEEE, 2005.

\bibitem{cfl:multiparty}
Ashok~K. Chandra, Merrick~L. Furst, and Richard~J. Lipton.
\newblock Multi-party protocols.
\newblock In {\em Proceedings of the Fifteenth Annual ACM Symposium on Theory
  of Computing}, pages 94--99, Boston, MA, April 1983.

\bibitem{chatto-focs}
A.~Chattopadhyay.
\newblock Discrepancy and the power of bottom fan-in in depth-three circuits.
\newblock In {\em IEEE FOCS}, 2007.

\bibitem{ca-setdisjointness}
A.~Chattopadhyay and A.~Ada.
\newblock Multiparty communication complexity of disjointness.
\newblock In {\em Electronic Colloquium on Computational Complexity TR08-002},
  2008.

\bibitem{chung-tetali}
F.~Chung and P.~Tetali.
\newblock Communication complexity and quasi-randomness.
\newblock {\em SIAM J. Discrete Math.}, 6(1):110--123, 1993.

\bibitem{hg90}
Johan H{\aa}stad and M.~Goldmann.
\newblock On the power of small-depth threshold circuits.
\newblock In {\em Proceedings 31st Annual Symposium on Foundations of Computer
  Science}, pages 610--618, St. Louis, MO, October 1990. IEEE.

\bibitem{ls-setdisjointness}
T.~Lee and A.~Shraibman.
\newblock Disjointness is hard in the multiparty number-on-forehead model.
\newblock In {\em Electronic Colloquium on Computational Complexity TR08-003},
  2008.

\bibitem{nw91}
Noam Nisan and Avi Wigderson.
\newblock Rounds in communication complexity revisited.
\newblock In {\em Proceedings of the Twenty-Third Annual ACM Symposium on
  Theory of Computing}, pages 419--429, New Orleans, LA, May 1991.

\bibitem{paturi-degree}
M.~Paturi.
\newblock On the degree of polynomials that approximate symmetric boolean
  functions.
\newblock In {\em ACM STOC}, pages 468--474, 1992.

\bibitem{raz:bns}
R.~Raz.
\newblock The bns-chung criterion for multiparty communication complexity.
\newblock {\em Computational Complexity}, 9(2):113--122, 2000.

\bibitem{razborov-quantum}
A.~A. Razborov.
\newblock Quantum communication complexity of symmetric predicates.
\newblock {\em Izvestiya: Mathematics}, 67(1):145--159, 2003.

\bibitem{sherstov-quantum}
A.~Sherstov.
\newblock The pattern matrix method for lower bounds on quantum communication.
\newblock In {\em Electronic Colloquium on Computational Complexity, TR
  07-100}, 2007.

\bibitem{sherstov-stoc}
A.~Sherstov.
\newblock Separating ac$^{0}$ from depth-2 majority circuits.
\newblock In {\em ACM FOCS}, pages 294--301, 2007.

\end{thebibliography}

\end{document}